\def\ps@headings{%
\def\@oddhead{\mbox{}\scriptsize\rightmark \hfil \thepage}%
\def\@evenhead{\scriptsize\thepage \hfil \leftmark\mbox{}}%
\def\@oddfoot{}%
\def\@evenfoot{}}
\makeatother \pagestyle{headings}
\DeclareMathOperator*{\argmax}{argmax}
\newtheorem{proposition}{\underline{Proposition}}
\begin{document}
\bibliographystyle{IEEEtran}

\title{Power Allocation for Secure SWIPT Systems with Wireless-Powered Cooperative Jamming}
\author{Mengyu Liu and Yuan Liu,~\IEEEmembership{Member,~IEEE}

\thanks{

The authors are  with the School of Electronic and Information Engineering,
South China University of Technology, Guangzhou 510641, China (e-mail: liu.mengyu@mail.scut.edu.cn,  eeyliu@scut.edu.cn).
}
}
\maketitle

\vspace{-1.5cm}

\begin{abstract}
This paper considers wireless-powered cooperative  jamming (CJ)  to secure communication between a transmitter (Tx) and an information receiver (IR), in the presence of an energy receiver (ER) which is termed as a potential eavesdropper.
%
%
The full-duplex jammer harvests energy from the Tx's information signal and transmits jamming signal at the same time, where the jamming signal not only confounds the ER (potential eavesdropper) but also charges the ER.
%
Our goal is to maximize the secrecy information rate by jointly optimizing the power allocation at the Tx and jammer while maintaining the harvested energy requirement of the ER.
The studied problem is  non-convex and we propose the optimal solution based on the Lagrange method.
Simulation results show that the proposed scheme significantly outperforms the benchmark schemes.

\end{abstract}

\begin{keywords}
Physical layer security, simultaneous wireless information and power transfer (SWIPT), power allocation, cooperative jamming (CJ).
\end{keywords}

\section{Introduction}

Recently, physical layer security has been investigated extensively  to secure wireless communications.
%
%
For the main methods of physical layer security, artificial noise (AN) and cooperative jamming (CJ) are very promising.
For the former case, the AN signal is transmitted into the null space of the desired signal to degrade the wiretap channel \cite{AN_liu_fading}.
While for the later the external jammer transmits the jamming signal to combat against eavesdropping \cite{relay1}.

On the other hand,  wireless information and power transfer becomes   an appealing  solution to prolong the lifetime of energy-constraint nodes. However, the energy receivers (ERs) are usually deployed relatively closer to the transmitter (Tx), thus the information receivers (IRs) are easily eavesdropped by the ERs.
%
%
A handful of works have considered the physical layer security by wireless energy transfer \cite{liuyuan,hybrid_BS,harvest_and_jam,wireless_powered_friendly_jammer,accumulate_FD}.
For instance, in \cite{hybrid_BS}, the hybrid base station first charges the energy-free source and then  performs CJ  when the source transmits information to the multiple destinations.
In \cite{harvest_and_jam}, multiple wireless-powered jammers were used to secure two-hop relay networks by designing the beamforming matrices. The authors in \cite{wireless_powered_friendly_jammer} conducted wireless power transfer for the jammer and analyzed the throughput.
An ``accumulate-and-jam" protocol was proposed in \cite{accumulate_FD} where the jammer was powered by the source and  secrecy performance metrics were investigated by Markov chain.
Note that the above works require a dedicated energy signal to  power the jammer.

In this paper, we consider the secrecy communication in an orthogonal frequency division multiplexing  (OFDM) based simultaneous wireless information and power transfer (SWIPT) system, which consists of one Tx, one IR, one ER (potential eavesdropper) and one friendly jammer  as shown in Fig. \ref{fig:ff1}. We assume that the Tx has constant energy and  the jammer has no embedded power supply thus needs to harvest energy from the Tx.
The Tx, IR and ER are  equipped with a single-antenna, while the friendly jammer  has two antennas, one for harvesting energy  from the Tx's information signal to the IR  and the other for transmitting jamming signals  to the ER simultaneously  by the full-duplex  capability.
By assuming that the jamming signal can be cancelled at the IR but cannot be removed at the ER (potential eavesdropper), we jointly optimize the transmit power of the Tx and jammer over subcarriers (SCs) to maximize the secrecy rate of the IR while satisfying the energy requirement of the ER. Optimal solution is derived to solve the non-convex optimization problem. Simulation results show that the huge superiority of the proposed method over conventional schemes.


Compared with the works of wireless-powered CJ \cite{hybrid_BS,harvest_and_jam,wireless_powered_friendly_jammer,accumulate_FD}, the differences of our paper are three-fold:
1) We consider a secure SWIPT system where a full-duplex jammer is wireless-powered  by the information signals sent to the IR. There is no need for dedicated energy signal as in other related works;  2) By adopting the  cancellation mechanism of jamming signal at the IR, the secrecy performance of the system can be greatly enhanced; 3) Optimal power allocation of the Tx and jammer are adapted
over SCs to explore frequency flexibility.

\begin{figure}[t]
\begin{centering}
\centering
\includegraphics[scale=0.5]{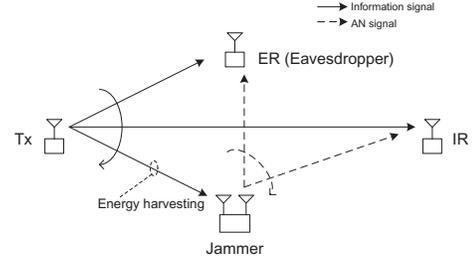}
\vspace{-0.1cm}
\caption{System model of the self-powered secrecy SWIPT.}\label{fig:ff1}
\end{centering}
\vspace{-0.1cm}
\end{figure}

\section{System Model and Problem Formulation}

Consider an OFDM-based  secrecy  SWIPT network consisting a Tx, an IR, an ER (potential eavesdropper) and a jammer as shown in Fig. \ref{fig:ff1}. The Tx, IR, and ER are equipped  with a single-antenna, while the jammer is equipped with two antennas with one for energy harvesting and the other for transmitting jamming signal. That is, when  the Tx transmits information-bearing signal to the IR, the ER harvests energy and may intercept the information. Meanwhile, the jammer uses one antenna to harvest energy from the same information-bearing signal and the other antenna to transmit jamming signal simultaneously by using the harvested energy, thanks to the full-duplex capability. Here it is assumed perfect isolation between the two antennas at the jammer such that self-interference cancellation is perfect.
Note that a small time lag is needed at the initial frame for such a full-duplex system,  which can be negligible if the whole duration of the transmission frame is long enough.
It is worth noting that the jamming signal from the jammer not only jams the ER (potential eavesdropper) but also acts as a source to power the ER.
We assume that the system has $N$ SCs.
The channel power gains  on SC $n$ from the Tx to IR, ER and jammer are denoted as $h_{I,n}, h_{E,n}$ and $h_{J,n}$, respectively, and  the channel power gains from the jammer to IR and ER are denoted as $g_{I,n}$ and $g_{E,n}$.

Assume that the total transmit power of the Tx is $P$ and the power transmitted by the Tx and jammer on SC $n$ are denoted as $p_n$ and $q_n$, respectively. In addition, we consider that there is a peak power constraint on $p_n$ and $q_n$, i.e., $0\leq p_n \leq \bar{p}$, $0 \leq q_n \leq \bar{q}$ for $n=1,\cdots,N$. The total transmit power constraint at the Tx can be given by
\begin{align}
\sum\limits_{n=1}^N p_n \leq P. \label{total_P}
\end{align}
As the energy-free jammer is  powered by the information signal sent from the Tx, the transmit power $q_n$  of  the jammer is constrained by
\begin{align}
\sum\limits_{n=1}^N q_n \leq \zeta\sum\limits_{n=1}^N p_n h_{J,n},\label{jammer_EH_C}
\end{align}
where $\zeta$ is energy conversion efficiency and we assume that $\zeta=1$ for convenience in the next.

The harvested power at the ER comprises of two components with one from the Tx and the other from the jammer, which should satisfy the minimum requirement $\overline{Q}$:
\begin{eqnarray} \label{Q}
\sum\limits_{n=1}^N(p_nh_{E,n}+q_ng_{E,n})\geq \overline{Q}.
\end{eqnarray}

We assume that the jamming signal transmitted by the jammer can be cancelled at the IR but cannot be removed at the ER. This can be practically justified by the similar method in \cite{zhangmeng}: A large set of random sequences (jamming signals) with Gaussian distribution are pre-stored at the jammer and their indices are the keys.
The jammer randomly selects a sequence (jamming signal) and sends its key to the IR over each SC $n$.  The key can be sent in a secret manner via channel independence and reciprocity. As the random sequence is only known at the IR, any potential eavesdropper cannot access the random sequence at each SC.
With this scheme, the achievable information rate of the IR and ER on SC $n$ can be respectively given by
\begin{align} \label{IR_cancel}
r_n = \log_2\bigg(1 + \frac{p_nh_{I,n}}{\sigma^2}\bigg),
r_n^e = \log_2\bigg(1 + \frac{p_nh_{E,n}}{\sigma^2+q_ng_{E,n}}\bigg).
\end{align}

Then  the secrecy  rate on SC $n$ is given by
%
%
%
\begin{align} \label{rate_express}
R_n=[r_n - r_n^e]^+
=\left\{
   \begin{array}{ll}
     r_n - r_n^e, & \hbox{if $q_n \geq A_n$} \\
     0, & \hbox{otherwise,}
   \end{array}
 \right.
\end{align}
where $[\cdot]^+\triangleq\max(0,\cdot)$ and $A_n \triangleq \bigg[\frac{\sigma^2(h_{E,n}-h_{I,n})}{h_{I,n}g_{E,n}}\bigg]^+$.

We consider the instantaneous secrecy rate maximization
by jointly optimizing the transmit power of the Tx and jammer while maintaining the energy harvesting requirement of the ER. The optimization problem can be mathematically formulated as:
\begin{subequations}
\label{problem}
\begin{align}
{\rm (P1):}
\max_{\{p_{n}, q_n\}}\quad\quad&\sum\limits_{n=1}^N R_{n}  \\
{\rm s.t.}\quad\quad& \eqref{total_P}-\eqref{Q}, \nonumber\\
\quad\quad& 0\leq p_n\leq \bar{p}, 0\leq q_n\leq \bar{q},  \forall n.
\end{align}
\end{subequations}

(P1) is non-convex since the rate expression \eqref{rate_express} is non-concave in the power variables. However, it is easy to verify that (P1) satisfies the so-called time-sharing condition, and thus (P1) has  zero duality-gap. This means that (P1) can be solved optimally by the Lagrange duality method. In next section, we apply the Lagrange duality method to solve (P1).

\section{Optimal Algorithm}

At first, the lagrangian of (P1) is expressed as
\begin{small}
\begin{align}
&\mathcal{L}(\{p_n\},\{q_n\},\lambda,\beta,\mu)
=\sum\limits_{n=1}^N R_n + \lambda\bigg(P-\sum\limits_{n=1}^Np_n\bigg)\label{LL}\\
 &+ \beta\bigg(\sum\limits_{n=1}^Np_nh_{J,n}-\sum\limits_{n=1}^Nq_n\bigg)
 + \mu\bigg(\sum\limits_{n=1}^N(p_nh_{E,n}+q_ng_{E,n})-\overline{Q}\bigg), \nonumber
\end{align}
\end{small}
where $\lambda$, $\beta$ and $\mu$ are the non-negative dual variables associated with the corresponding constraints \eqref{total_P}, \eqref{jammer_EH_C} and \eqref{Q}, respectively. Then,  the dual function $g(\lambda,\beta,\mu)$ of  (P1) is defined as
%
\begin{align}
\max_{0\leq p\leq \bar{p},0\leq q \leq \bar{q}}\mathcal{L}(\{p_n\},\{q_n\},\lambda,\beta,\mu). \label{Lagrange}
\end{align}
The dual problem is thus given by
\begin{align}
\min_{\lambda\geq 0,\beta\geq 0,\mu\geq 0} \quad g(\lambda,\beta,\mu) .
\end{align}

With a given set of $\{\lambda,\beta,\mu\}$, the maximization problem in \eqref{Lagrange} can be decomposed into $N$ parallel subproblems all having the same structure and each for one SC. By dropping the index $n$ for brevity, each subproblem is given by
\begin{align} \label{L_n}
L(p,q) = R - \lambda p + \beta (ph_{J}-q)
+ \mu (p h_{E}+q g_{E}).
\end{align}
In the next, we jointly optimize $p$ and $q$ in two different cases depending on $R\geq 0$ or $R = 0$ in  \eqref{rate_express}.

 When $q\geq A$:  We define $f_1(p,q)\triangleq\frac{\partial L}{\partial p}$, $f_2(p,q)\triangleq\frac{\partial L}{\partial q}$  and $\chi(\cdot)$ is the  real nonnegative  root of $f_1(p,q)=0$ and/or $f_2(p,q)=0$. Then the optimal solution $(p,q)$ in this case is given by the following proposition.
\begin{proposition} \label{solution_1}
The optimal solution of  (P1) with $q \geq A$ is given by

If $f_1(0,A)\leq f_1(0,\bar{q})\leq 0$,
\begin{eqnarray} \nonumber
 p=0,\quad q=\left\{
                    \begin{array}{ll}
                      A, & \hbox{if $-\beta+\mu g_{E}<0$} \\
                      \bar{q}, & \hbox{otherwise.}
                    \end{array}
                  \right.
\end{eqnarray}

If  $f_1(0,\bar{q}) \geq f_1(0,A) \geq 0$
\begin{align} \nonumber
\left\{
  \begin{array}{ll}
    p = \bar{p},\quad q=\chi(f_2(\bar{p},q)),      &\hbox{if $ f_1(\bar{p},A) \geq 0$} \\
    (p,q) = \chi(f_1(p,q),f_2(p,q)),     &\hbox{if $ f_1(\bar{p},\bar{q}) \leq 0$} \\
    (p,q) = \argmax\limits_{(p,q)\in \Upsilon_1} L(p,q),   &\hbox{otherwise,}
  \end{array}
\right.
\end{align}
where $\Upsilon_1$ is denoted as
\begin{small}
\begin{eqnarray} \nonumber
\Upsilon_1 = \left\{
               \begin{array}{ll}
                 (p,q)=\chi(f_1(p,q),f_2(p,q)), & \hbox{$A\leq q \leq \chi(f_1(\bar{p},q))$} \\
                 p = \bar{p},q = \chi(f_2(\bar{p},q)), & \hbox{$\chi(f_1(\bar{p},q)) \leq q \leq \bar{q}$.}
               \end{array}
             \right.
\end{eqnarray}
\end{small}

If $f_1(0,A) \leq 0 \leq f_1(0,\bar{q})$
\begin{align} \nonumber
\left\{
  \begin{array}{ll}
    (p,q) = \argmax\limits_{(p,q)\in \Upsilon_2} L(p,q),  \hbox{if $ f_1(\bar{p},\bar{q})  \leq 0$} \\
     (p,q) = \argmax\limits_{(p,q)\in \Upsilon_3} L(p,q),  \hbox{  otherwise,}
  \end{array}
\right.
\end{align}
where $\Upsilon_2$ and $\Upsilon_3$ are given by
\begin{footnotesize}
\begin{eqnarray} \nonumber
\Upsilon_2 = \left\{
               \begin{array}{ll}
                 p=0,q=\left\{
                         \begin{array}{ll}
                           A,  \hbox{$-\beta+\mu g_E<0$} \\
                           \chi(f_1(0,q)),  \hbox{otherwise.}
                         \end{array}
                       \right.
 & \hbox{$A\leq q \leq \chi(f_1(0,q))$} \\
                 (p,q)=\chi(f_1(p,q),f_2(p,q)),  & \hbox{$\chi(f_1(0,q))\leq q \leq \bar{q}$.}
               \end{array}
             \right.
\end{eqnarray}
\end{footnotesize}
\begin{footnotesize}
\begin{eqnarray} \nonumber
\Upsilon_3 = \left\{
               \begin{array}{ll}
                 p=0,q=\left\{
                         \begin{array}{ll}
                           A,  \hbox{$-\beta+\mu g_E<0$} \\
                           \chi(f_1(0,q)),  \hbox{otherwise.}
                         \end{array}
                       \right.
  \hbox{$A\leq q \leq \chi(f_1(0,q))$} \\
                 (p,q)=\chi(f_1(p,q),f_2(p,q)),  \hbox{$\chi(f_1(0,q))\leq q \leq \chi(f_1(\bar{p},q))$} \\
                 p = \bar{p},q=\chi(f_2(\bar{p},q)),  \quad \quad \quad \quad  \quad\quad\quad\hbox{$\chi(f_1(\bar{p},q))\leq q \leq \bar{q}.$}
               \end{array}
             \right.
\end{eqnarray}
\end{footnotesize}

\end{proposition}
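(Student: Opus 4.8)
\emph{Setup and the key monotonicity lemma.} I would first restrict attention to the region $q\ge A$, where $R=r-r^e$ is smooth, and compute the two partials appearing in \eqref{L_n}. Since only $R$ is nonlinear, $f_1(p,q)=\frac{\partial R}{\partial p}-\lambda+\beta h_J+\mu h_E$ and $f_2(p,q)=\frac{\partial R}{\partial q}-\beta+\mu g_E$, with $\frac{\partial R}{\partial p}=\frac{1}{\ln 2}\bigl(\frac{h_I}{\sigma^2+p h_I}-\frac{h_E}{\sigma^2+q g_E+p h_E}\bigr)$ and $\frac{\partial R}{\partial q}=\frac{g_E}{\ln 2}\bigl(\frac{1}{\sigma^2+q g_E}-\frac{1}{\sigma^2+q g_E+p h_E}\bigr)\ge 0$. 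The crucial preliminary is a concavity lemma. A direct differentiation yields the identity $\frac{\partial^2 R}{\partial p^2}=-k(p,q)\,\frac{\partial R}{\partial p}$, where $k(p,q)=\ln 2\bigl(\frac{h_I}{\sigma^2+p h_I}+\frac{h_E}{\sigma^2+q g_E+p h_E}\bigr)>0$; hence $\frac{\partial R}{\partial p}$ cannot change sign in $p$. Evaluating at $p=0$ and invoking the threshold identity $\sigma^2+A g_E=\sigma^2 h_E/h_I$ --- which is exactly what the definition of $A$ encodes --- gives $\frac{\partial R}{\partial p}\ge 0$ for all $q\ge A$, and therefore $\frac{\partial^2 R}{\partial p^2}\le 0$ there. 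Similarly $\frac{\partial^2 R}{\partial q^2}\le 0$ and $\frac{\partial^2 R}{\partial p\,\partial q}>0$. Consequently $f_1$ is nonincreasing in $p$ and increasing in $q$, while $f_2$ is nonincreasing in $q$ and increasing in $p$.

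\emph{Maximizing out $p$.} Because $L(\cdot,q)$ is concave for every fixed $q\ge A$, its maximizer $p^\star(q)$ on $[0,\bar p]$ is unique and given by a three-way rule driven by the sign of $f_1$ at the two $p$-endpoints: $p^\star(q)=0$ if $f_1(0,q)\le 0$, $p^\star(q)=\bar p$ if $f_1(\bar p,q)\ge 0$, and otherwise $p^\star(q)=\chi(f_1(\cdot,q))$, the unique root of $f_1(\cdot,q)=0$. Since $f_1$ is monotone in each argument, its sign on the whole rectangle is fixed by its values at the four corners $(0,A),(0,\bar q),(\bar p,A),(\bar p,\bar q)$, and the breakpoints in $q$ at which $p^\star$ switches branch are precisely $q=\chi(f_1(0,q))$ and $q=\chi(f_1(\bar p,q))$. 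This is what produces the three top-level cases in the statement: $f_1(0,A)\le f_1(0,\bar q)\le 0$ forces $f_1\le 0$ throughout and hence $p^\star\equiv 0$; $f_1(0,A)\ge 0$ keeps $p$ away from $0$; and the mixed sign $f_1(0,A)\le 0\le f_1(0,\bar q)$ makes $p^\star$ start at $0$ and leave it as $q$ grows.

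\emph{Maximizing over $q$ along the ridge.} I would then define $\phi(q)=L(p^\star(q),q)=\max_{0\le p\le\bar p}L(p,q)$ and maximize it over $q\in[A,\bar q]$, using the envelope relation $\phi'(q)=f_2(p^\star(q),q)$. In the first case ($p^\star\equiv 0$) one has $R(0,q)=0$, so $f_2(0,q)=-\beta+\mu g_E$ is constant and $\phi$ is linear in $q$: the maximizer is $q=A$ when $-\beta+\mu g_E<0$ and $q=\bar q$ otherwise. In the subcases where a single branch persists over all relevant $q$ --- $f_1(\bar p,A)\ge 0$ (so $p^\star\equiv\bar p$), or $f_1(\bar p,\bar q)\le 0$ together with $f_1(0,A)\ge 0$ (so $p^\star$ is interior throughout) --- $\phi$ reduces to $L(\bar p,q)$ or to the interior stationarity conditions, giving the single solutions $p=\bar p,\,q=\chi(f_2(\bar p,q))$ and $(p,q)=\chi(f_1,f_2)$ respectively. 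In the remaining, mixed regimes the ridge is a concatenation of two or three branches --- a $p=0$ piece, an interior piece, and a $p=\bar p$ piece, glued at the breakpoints $\chi(f_1(0,q))$ and $\chi(f_1(\bar p,q))$ --- on each of which the candidate maximizer is read off as above; collecting these candidates gives exactly the sets $\Upsilon_1$, $\Upsilon_2$, $\Upsilon_3$, and the stated optimum is obtained as the $\argmax$ of $L(p,q)$ over these sets.

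\emph{Main obstacle.} The decisive step is the concavity lemma for $q\ge A$: proving $\frac{\partial R}{\partial p}\ge 0$ via the sign-preservation identity $\frac{\partial^2 R}{\partial p^2}=-k\,\frac{\partial R}{\partial p}$ and the threshold value of $A$, which is what rescues tractability even though (P1) is globally non-convex. The second difficulty is that $\phi$ need not be concave --- indeed the Hessian of $R$ can be indefinite, for instance near $p=0$ --- so one cannot simply equate a gradient to zero and declare victory; the proof must maximize over $p$ first (a genuinely concave step) and then compare the finitely many branch candidates over $q$, which is exactly why the mixed cases appear as an $\argmax$ over the candidate sets rather than as a single stationary point. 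Verifying that these candidate sets are exhaustive, and that in the single-branch subcases the indicated point is the global maximizer, is the most delicate bookkeeping in the argument.
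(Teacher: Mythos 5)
Your proposal is correct and takes essentially the same route as the paper's Appendix~A: a monotonicity/sign analysis of $f_1$ and $f_2$ over the box $[0,\bar p]\times[A,\bar q]$, the same three top-level cases keyed to the corner values of $f_1$, nested optimization (determine $p$ from the sign of $f_1$, then $q$ from $f_2$), and the same branch-candidate sets $\Upsilon_1,\Upsilon_2,\Upsilon_3$ with a final $\argmax$ comparison. The only substantive difference is that your sign-preservation identity $\frac{\partial^2 R}{\partial p^2}=-k\,\frac{\partial R}{\partial p}$ together with the threshold identity $\sigma^2+Ag_E=\sigma^2 h_E/h_I$ actually \emph{proves} the decrease of $f_1$ in $p$ (equivalently $\frac{\partial R}{\partial p}\ge 0$ for $q\ge A$) that the paper merely asserts as ``easily proved''---note only that the correct factor is $k=\frac{h_{I}}{\sigma^2+ph_{I}}+\frac{h_{E}}{\sigma^2+qg_{E}+ph_{E}}$ without your extra $\ln 2$, which is immaterial to the sign argument.
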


\begin{proof}
Please refer to Appendix A.
\end{proof}

 When $q< A$:  $R=0$ in this case, thus \eqref{L_n} becomes a linear function of $p$ and $q$. Hence, the optimal solution  is given by
\begin{small}
\begin{align}\nonumber
p =
\begin{cases}
 0, \mbox{if}  - \lambda + \beta h_{J} + \mu h_{E} < 0,\\
\bar{p},\mbox{otherwise}.
\end{cases}
q =
\begin{cases}
 0, \mbox{if}  - \beta + \mu g_{E} < 0,\\
A,\mbox{otherwise}.
\end{cases}
\end{align}
\end{small}

Obtaining  the optimal $p$ and $q$ in each region, and  we can select the  $(p^*,q^*)$  which achieves the largest value of $L(p,q)$ in \eqref{L_n} as the optimal solution with given dual variables.
%

Finally, we update the dual variables using ellipsoid method due to the fact that $P-\sum_{n=1}^Np_n,  \sum_{n=1}^Np_nh_{J,n}-\sum_{n=1}^Nq_n$ and $ \sum_{n=1}^N\big(p_nh_{E,n}+q_nh_{E,n}\big)-\overline{Q} $ are the subgradients of $\lambda$, $\beta$ and $\mu$, respectively.

\section{Numerical Results}

We set up  $N=64$ SCs, the noise power $ \sigma^2=-60$ dBm  and the pass-loss exponent is  3. The peak power constraints $\bar{p}=\bar{q}=2P/N$.
The Tx, jammer, and IR are on  one straight line and the distance from the Tx to IR is  20 m. The jammer moves from the Tx to the IR, and the distance between the Tx and  jammer  is denoted as $d_1$.
%
Besides, we assume that the distance from the Tx to ER is 10 m with 30 degrees.  For comparison, we introduce three benchmark schemes: the jamming signal cannot be cancelled at both IR and ER
 (a near-optimal solution is obtained by block-coordinate descent method),
without jammer, and the equal power allocation (EPA) applied at both Tx and jammer.

\begin{figure}[t]
\begin{centering}
\centering
\includegraphics[scale=0.5]{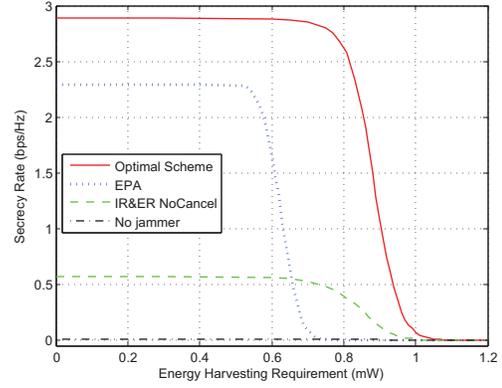}
\vspace{-0.1cm}
\caption{Secrecy rate versus required harvested power, with $P=30$ dBm.}\label{MSC_Q_R}
\end{centering}
\vspace{-0.1cm}
\end{figure}

Fig. \ref{MSC_Q_R} demonstrates the secrecy rate versus the harvested energy constraint $\overline{Q}$ with the total transmit power of Tx set as $P=30$ dBm and $d_1=10$ m.
First, for all schemes, the secrecy rate is observed to decrease with $\overline{Q}$. It is also observed that the proposed scheme outperforms the other three benchmark schemes significantly.
The EPA shows good performance with small energy requirement while becomes infeasible when $\overline{Q}$ becomes larger ($\overline{Q}>0.7$ mW).
Moreover, the scheme without  jammer  has the worst performance which has almost zero secrecy rate. This is because that the ER is located  nearly to the Tx and thus possesses  much better channel gains compared with the IR, thus the secrecy is unable to be guaranteed.

\begin{figure}[t]
\begin{centering}
\centering
\includegraphics[scale=0.5]{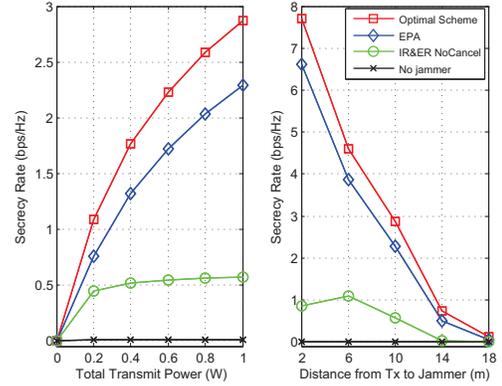}
\vspace{-0.1cm}
\caption{Secrecy rate versus the total transmit power and $d_1$, with $\overline{Q} = 100\mu$W.}\label{MSC_P_R}
\end{centering}
\vspace{-0.1cm}
\end{figure}

Fig. \ref{MSC_P_R} illustrates the secrecy information rate versus the the total transmit power $P$ with the harvested energy requirement $\overline{Q}=100$ $\mu$W and $d_1=10$ m. It also shows that the proposed optimal scheme achieves considerable gain compared with the other three benchmark schemes.
%
%
When   $d_1$ varies, the performance of the proposed scheme and EPA are observed to decrease with the distance $d_1$. This may be because that the jamming signals become weaker as the harvested energy of the jammer will decrease with a longer distance $d_1$ even if the channel gains from the jammer to the IR increases.
%
Besides,  the proposed scheme with AN cancellation at IR can achieve positive secrecy rate over a wide range compared with the scheme ``IR$\&$ER NoCancel", which also demonstrates the superiority of the proposed scheme.

\section{Conclusion}
This paper studied optimal power allocation in secure OFDM-based SWIPT system with the help of a wireless-powered friendly jammer.
%
%
The joint power allocation of the Tx and jammer were  optimized to maximize the  secrecy rate while satisfying the harvested power constraint.
We derived the optimal solution to solve the considered non-convex problem.
Finally, the superiority of the proposed scheme was  verified by the numerical results.

\appendices

\section{Proof of Proposition \ref{solution_1}}

As we define that
\begin{align}
f_1(p,q) \triangleq  & \frac{h_{I}}{\ln2(\sigma^2+ph_{I})} - \frac{h_{E}}{\ln2(\sigma^2+qg_{E}+ph_{E})} \nonumber\\
&- \lambda + \beta h_{J} + \mu h_{E},\label{piandaoshu_p_1} \\
f_2(p,q) \triangleq  &\frac{ph_{E}g_{E}}{\ln2(\sigma^2+qg_{E}+ph_{E})(\sigma^2+qg_{E})}
- \beta + \mu g_{E}. \label{piandaoshu_q_1}
\end{align}

We can easily prove that \eqref{piandaoshu_p_1} is a monotonic decreasing function with $p$, thus we have
$
f_1(\bar{p},q)\leq f_1(p,q) \leq f_1(0,q).
$
In addition, $f_1(p,q)$ is a monotonic increasing function of $q$ in $[A,\bar{q}]$. There are three cases about the sign of $f_1(0,q)$ which will be discussed in the following.

\textbf{Case I:} $f_1(0,A)\leq f_1(0,\bar{q})\leq 0$.
In this case,  $f_1(p,q) \leq f_1(0,q) \leq  f_1(0,\bar{q}) \leq 0$, thus $p=0$. Therefore, $f_2(0,q)$ becomes a linear function of $q$, thus the optimal $q$ can be given by
\begin{align}
q =
\begin{cases}
A, &\mbox{if} \quad  - \beta + \mu g_{E} < 0,\\
\bar{q},&\mbox{otherwise}.
\end{cases}
\end{align}

\textbf{Case II:} $f_1(0,\bar{q}) \geq f_1(0,A) \geq 0$.
In this case, $f_1(0,q)\geq f_1({0,A}) \geq 0$. In order to figure out the sign of $f_1(p,q)$, there are three subcases for $f_1(\bar{p},q)$  as follows:
\begin{itemize}
   \item \textbf{Case II-i:}   $f_1(\bar{p},\bar{q})\geq f_1(\bar{p},A) \geq 0$.
   In this subcase,  $f_1(p,q) \geq f_1(\bar{p},q) \geq f_1(\bar{p},A) \geq 0$, thus the optimal solution of $p$ is  $p=\bar{p}$. Then  the solution is $q = \chi(f_2(\bar{p},q))$  which can be found by the bisection over $[A,\bar{q}]$.

   \item \textbf{Case II-ii:}   $f_1(\bar{p},A)\leq f_1(\bar{p},\bar{q}) \leq 0$.
   In this subcase,  $f_1(\bar{p},q)\leq f_1(\bar{p},\bar{q}) \leq 0$. The solution is $\chi(f_1(p,q),f_2(p,q))$.  We can first eliminate $p$ in $f_2(p,q)$  and then find the optimal $q$ by numerical search over $[0,\bar{q}]$.

   \item \textbf{Case II-iii:}   $f_1(\bar{p},A) \leq 0 \leq f_1(\bar{p},\bar{q})$.
   In this subcase, $f_1(\bar{p},q)$ is not always positive or negative. We have $f_1(\bar{p},q)\geq 0$ when $q \geq \chi(f_1(\bar{p},q))$ and $f_1(\bar{p},q)<0$ otherwise. Therefore:

 \textbf{Region 1:}  $A \leq q < \chi(f_1(\bar{p},q))$.
  At this region, $f_1(\bar{p},q)<0$. Similar to Case II-ii, the optimal solution $(p,q)$ at this region is given by $\chi(f_1(p,q),f_2(p,q))$.

\textbf{Region 2:}   $\chi(f_1(\bar{p},q)) \leq q \leq \bar{q}$.
   At this region, $f_1(p,q) \geq f_1(\bar{p},q)\geq 0 $. Similar to Case II-i, we have $p=\bar{p}$ and $q = \chi(f_2(\bar{p},q))$  which can be found over $[\chi(f_1(\bar{p},q)),\bar{q}]$.

The optimal solution can be found via a simple search over $\Upsilon_1$ which defined as a set consisting the optimal solutions of the above two regions of case II-iii.

\end{itemize}

\textbf{Case III:} $f_1(0,A) \leq 0 \leq f_1(0,\bar{q})$.
In this case, $f_1(0,q)$ is not always positive or negative. We can easily get that $f_1(0,q)\leq 0$ when $q \leq \chi(f_1(0,q))$ and $f_1(0,q)>0$ otherwise.

\begin{itemize}
  \item \textbf{Case III-i:} $f_1(\bar{p},A) \leq f_1(\bar{p},\bar{q}) \leq 0$.
  In this subcase, $f_1(\bar{p},q) \leq f_1(\bar{p},\bar{q}) \leq 0$. According to the sign of $f_1(0,q)$, we obtain the optimal $p$ and $q$ in the following two regions.

   \textbf{Region 1:} $A \leq q \leq \chi(f_1(0,q))$.
    In this region, $f_1(p,q)  \leq   f_1(0,q) \leq 0$ and the optimal $p$ is given by $p = 0$. Similar to Case I, the optimal $q$ is given by
\begin{align} \label{q_n_b}
q =
\begin{cases}
A, &\mbox{if} \quad  - \beta + \mu g_{E} < 0,\\
\chi(f_1(0,q)),&\mbox{otherwise}.
\end{cases}
\end{align}
 \textbf{Region 2:} $\chi(f_1(0,q)) \leq q \leq \bar{q}$.
    In this region, $f_1(0,q)\geq 0$. Similar to Case II-ii, the optimal solution is given by $\chi(f_1(p,q),f_2(p,q))$.

  Therefore, we define $\Upsilon_2$  consisting the optimal solutions of the above two region and the optimal solution of this case can be given by a simple search over $\Upsilon_2$.

  \item \textbf{Case III-ii:} $f_1(\bar{p},A) \leq 0 \leq f_1(\bar{p},\bar{q})$.
  In this subcase, similar to $f_1(0,q)$, $f_1(\bar{p},q)$ is not always positive or negative. Since $f_1(\bar{p},q) \leq f_1(0,q)$, we can easily have $ \chi(f_1(\bar{p},q)) \geq \chi(f_1(0,q)) $. In the following, We jointly optimize $p$ and $q$ in the following three regions.

   \textbf{Region 1:} $A\leq q \leq \chi(f_1(0,q))$.
    In this region, $f_1(p,q)\leq f_1(0,q)\leq 0$, thus the optimal $p$ is given by $p=0$, and the optimal $q$ is given by \eqref{q_n_b}.

    \textbf{Region 2:}  $\chi(f_1(0,q)) \leq q \leq \chi(\bar{p},q))$.
    In this region, $f_1(0,q)\geq 0$ and $f_1(\bar{p},q) \leq 0$. Similar to Case II-ii,  we have $\chi(f_1(p,q),f_2(p,q))$ as the solution of this region.

 \textbf{Region 3:} $\chi(f_1(\bar{p},q)) \leq q \leq \bar{q}$.
    In this region, $f_1(p,q)\geq f_1(\bar{p},q)\geq 0$. Similar to Case II-i, we have $p = \bar{p}$ and $q = \chi(f_2(\bar{p},q))$ which can be obtained over $[\chi(f_1(\bar{p},q)), \bar{q}]$.

As a result, the optimal solution of Case III-ii can be found over $\Upsilon_3$ which consisting the three solutions given in the above regions.

\end{itemize}

\bibliography{secrecy_reference}
\end{document}